\renewcommand{\hat}{\widehat}
\newtheorem{thm}{Theorem}[section]
\newtheorem{lem}[thm]{Lemma}
\newtheorem{rem}[thm]{Remark}
\newtheorem{cor}[thm]{Corollary}
\begin{document}

\title{A Note on Utility Indifference Pricing with Delayed Information}

\author{ Peter Bank\footnote{Technische Universit{\"a}t Berlin,
    Institut f{\"u}r Mathematik, Stra{\ss}e des 17. Juni 136, 10623
    Berlin, Germany, email \texttt{bank@math.tu-berlin.de}. P.Bank is supported in part by the GIF Grant 1489-304.6/2019}
  \hspace{2ex} Yan Dolinsky\footnote{ Department of Statistics, Hebrew
    University.
    email \texttt{yan.dolinsky@mail.huji.ac.il}. Y. Dolinsky is supported in part by the GIF Grant 1489-304.6/2019 and the ISF grant 160/17 }}

\date{\today}

\maketitle
\begin{abstract}
   We consider the Bachelier model with information delay where investment decisions can be based only on observations from $H>0$ time units before. Utility indifference prices are studied for vanilla options and we compute their non-trivial scaling limit for vanishing delay when risk aversion is scaled liked $A/H$ for some constant $A$. Using techniques from~\cite{Fritelli:00}, we develop discrete-time duality for this setting and show how the relaxed form of martingale property introduced by~\cite{KS:06} results in the scaling limit taking the form of a volatility control problem with quadratic penalty.
\end{abstract}

\begin{description}
\item[Mathematical Subject Classification (2010):] 91G10, 91G20

\item[Keywords:] Utility Indifference Pricing, Hedging with Delay, Asymptotic Analysis
\end{description}





\maketitle

\markboth{P. Bank and Y. Dolinsky}{Utility Indifference Pricing with Delayed Information}
\renewcommand{\theequation}{\arabic{section}.\arabic{equation}}
\pagenumbering{arabic}
\section{Introduction and the Main Result}

Taking into account frictions is an important challenge in financial modelling.
In this paper, we focus on the friction that investment decisions may be based only on delayed information,
the actual present market prices being unknown at the time of decision making. In line with other, more widely
investigated frictions like transaction costs, information delay or, more generally, operating under partial information makes the computation
of prices and hedges rather difficult. It is thus of interest to resort to asymptotic methods that study suitably
re-scaled models for vanishing friction.

For the case of transaction costs,
the seminal work of Barles and Soner \cite{BS:98} determines the scaling limit
of utility indifference prices of vanilla options for small proportional transaction costs and high risk aversion.
The present note provides an analogous analysis for the case of information delay,
albeit using convex duality and martingale techniques rather than taking a
PDE approach as pursued in~\cite{BS:98}.

Let us briefly review some of the relevant literature on models with partial information. In \cite{S:94} the author
solved the mean variance hedging problem with partial information for European contingent claims when the asset price process is a martingale. These results were extended beyond the martingale case, for instance, in \cite{F:00,MTT:08} and for defaultable claims in \cite{KX:07}. In a recent paper
\cite{MZS:18} the authors studied the mean variance hedging problem in a setup where the investor has deterministic information
and obtained explicit
solutions in the Bachelier and Black–Scholes models.
The papers \cite{KS:06, CKT:17} develop an arbitrage theory
for hedging with partial information and the papers
\cite{IM:17,DZ:20} treat super-replication with delayed information.
For more details about hedging with partial information see also \cite{Z:17}.
Let us also mention the work
\cite{NO:09} which applied Malliavin calculus
to the utility maximization problem with delayed information. A more recent paper is
\cite{SZ:19} where the authors introduce and study a general framework of optimal stochastic control with delayed information.

To formulate our main result in this note, let
$(\Omega, \mathcal{F}, \mathbb P)$ be  a complete probability space carrying a
one--dimensional Brownian motion
$(W_t)_{t \in [0,T]}$ with natural augmented filtration $(\mathcal{F}^W_t)_{t \in [0,T]}$ and time horizon $T \in (0,\infty)$.
We consider a simple financial market
with a riskless savings account bearing zero interest (for simplicity) and with a risky
asset $S$ with Bachelier price dynamics
\begin{equation*}
S_t=s_0+\sigma W_t+\mu t, \quad t \in [0,T],
\end{equation*}
where $s_0 \in \RR$ is the initial asset price, $\sigma>0$ is the constant volatility and $\mu\in\mathbb{R}$ is the constant drift.

Contrary to the usual setting considered in Mathematical Finance, we focus on an investor who is informed about the risky asset's price changes with a delay $H>0$. As a result, the investor's decisions at time $t$ can
depend only on the information observed up to time $t-H$ as described by the filtration
$\mathcal G^H_t:=\mathcal F^W_{(t-H)^+}$, $t\in [0,T]$.
In line with this delay, the set $\cA^H$ of admissible simple trading strategies consists of
piecewise constant processes of the form
\begin{equation}\label{form}
\gamma_t=\sum_{i=1}^{N} \gamma_{t_i} 1_{(t_{i-1},t_{i}]}(t), \quad  t\in [0,T],
\end{equation}
where $N\in\mathbb N$ is deterministic,
$0=t_0< t_1<\dots< t_N=T$ form a partition of $[0,T]$ and where each
 $\gamma_{t_i}$,  $i=1,\dots,N$, is a $\mathcal G^H_{t_{i-1}}$-measurable random variable.
For any $\gamma\in\mathcal A^H$ of the form~\eqref{form},
the corresponding P\&L at the maturity date $T$ is given as usual:
\begin{equation*}
V^{\gamma}_T:=\sum_{i=1}^{N} \gamma_{t_i}(S_{t_{i}}-S_{t_{i-1}}).
\end{equation*}
The investor will use this investment opportunity to hedge against the payoff of a vanilla European option with payoff $F=f(S_T)$ where
$f:\mathbb R\rightarrow [0,\infty)$ is continuous with strictly sub-quadratic growth in the sense that for some $p\in (0,2)$ there is $C \in [0,\infty)$ with
\begin{equation}\label{2.0+}
0 \leq f(x)\leq C(1+|x|^p),  \quad x\in\mathbb R.
\end{equation}
It is immediate from~\eqref{2.0+} that
\begin{align*}
\mathbb E_{\mathbb P}[\exp\left({\alpha f(S_T)}\right)]<\infty, \quad \alpha\in\mathbb R.
\end{align*}
The investor will assess the quality of a hedge by the resulting expected utility. Assuming exponential utility with constant absolute risk aversion $\lambda>0$, the utility indifference price of
the claim $f(S_T)$ (see, e.g., \cite{R:08} for details on indifference prices) does not depend on the investor's initial wealth
and takes the well-known form
\begin{equation}\label{2.2}
\pi(H,\lambda,f(S_T))=
\frac{1}{\lambda}\log\left(\frac{\inf_{\gamma\in\mathcal A^H}\mathbb E_{\mathbb P}\left[\exp\left(-\lambda\left(V^{\gamma}_T-f(S_T)\right)\right)\right]}
{\inf_{\gamma\in\mathcal A^H}\mathbb E_{\mathbb P}\left[\exp\left(-\lambda V^{\gamma}_T\right)\right]}\right).
\end{equation}
\begin{rem}\label{rem:denom}
 The infimum in the denominator of~\eqref{2.2} is well-known to be attained by the constant strategy $\gamma \equiv \mu/(\lambda \sigma^2)$ which of course is admissible with any delay in information and which, for any risk aversion $\lambda>0$, yields the same value $\exp(-(\mu^2T/(2\sigma^2))$. As a consequence, the relevant quantity to be understood is the numerator in~\eqref{2.2}.
\end{rem}
It is readily seen that for $H \downarrow 0$, the above indifference price converges to the usual indifference price with no delay in information. A more interesting limit emerges, however, if we re-scale the investor's risk-aversion in the form $A/H$ as described by our main result which will be proved in Section~\ref{sec:3}:
\begin{thm}\label{thm.1}
For vanishing delay $H \downarrow 0$ and re-scaled high risk-aversion $A/H$ with $A>0$ fixed, the utility indifference price of $f(S_T)$ with continuous $f \geq 0$ exhibiting sub-quadratic growth~\eqref{2.0+}  has the scaling limit
\begin{equation}\label{2.3}
 \lim_{H\downarrow 0} \pi(H,A/H,f(S_T))=
 \sup_{\nu\in\Gamma}\mathbb E_{\mathbb P}\left[ f\left(s_0+\int_{0}^T \nu_t dW_t\right)- \frac{1}{2A\sigma^2}\int_{0}^T\left(\nu_t-\sigma\right)^2dt \right],
\end{equation}
where $\Gamma$ is the class of all bounded  $\mathcal(\cF^W_t)$-predictable processes $\nu$.
\end{thm}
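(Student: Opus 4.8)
The plan is to dualize the primal control problem over delayed trading strategies into a control problem over ``delayed martingale measures'', and then to carry out the $H\downarrow0$ asymptotics on the dual side, where re-scaling the risk aversion by $1/H$ turns relative entropy into the quadratic volatility penalty in~\eqref{2.3}. Throughout write $\lambda=A/H$ and let $\mathcal{H}(\mathbb{Q}\mid\mathbb{P})$ denote relative entropy.

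\emph{Step 1: entropic duality.} Fixing a partition and invoking the convex-duality argument of~\cite{Fritelli:00} for exponential utility, and then letting the partition refine, one shows
\[
\frac{1}{\lambda}\log\inf_{\gamma\in\mathcal{A}^H}\mathbb{E}_{\mathbb{P}}\Big[\exp\big(-\lambda\,(V^\gamma_T-f(S_T))\big)\Big]
=\sup_{\mathbb{Q}\in\mathcal{M}^H}\Big(\mathbb{E}_{\mathbb{Q}}\big[f(S_T)\big]-\tfrac{1}{\lambda}\mathcal{H}(\mathbb{Q}\mid\mathbb{P})\Big),
\]
where $\mathcal{M}^H$ is the set of $\mathbb{Q}\ll\mathbb{P}$ of finite relative entropy obeying the relaxed martingale property $\mathbb{E}_{\mathbb{Q}}[S_t-S_s\mid\mathcal{G}^H_s]=0$ for all $0\le s\le t\le T$ --- exactly the polarity condition dual to $\mathcal{A}^H$ in the spirit of~\cite{KS:06}. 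The only real work here is a minimax exchange of $\inf_\gamma$ and $\sup_{\mathbb{Q}}$, justified by the exponential integrability following from~\eqref{2.0+} and coercivity of the entropy, plus the limit over refining partitions. Combining this (applied to $f$ and to $f\equiv0$) and using Remark~\ref{rem:denom} in the dual form $\inf_{\mathbb{Q}\in\mathcal{M}^H}\mathcal{H}(\mathbb{Q}\mid\mathbb{P})=\mu^2T/(2\sigma^2)$ yields
\[
\pi(H,A/H,f(S_T))=\sup_{\mathbb{Q}\in\mathcal{M}^H}\Big(\mathbb{E}_{\mathbb{Q}}\big[f(S_T)\big]-\tfrac{H}{A}\,\mathcal{H}(\mathbb{Q}\mid\mathbb{P})\Big)+\tfrac{\mu^2TH}{2A\sigma^2},
\]
so it remains to analyse the first term, the last one vanishing as $H\downarrow0$.

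\emph{Step 2: effective volatility and the entropy estimate.} For $\mathbb{Q}\in\mathcal{M}^H$ set $M^{\mathbb{Q}}_t:=\mathbb{E}_{\mathbb{Q}}[S_T\mid\mathcal{F}^W_t]$. Taking $s=0$ in the relaxed martingale property gives $\mathbb{E}_{\mathbb{Q}}[S_t]=s_0$ for all $t$, hence $M^{\mathbb{Q}}_0=s_0$; taking general $s$ gives $M^{\mathbb{Q}}_t=\mathbb{E}_{\mathbb{Q}}[S_{(t+H)\wedge T}\mid\mathcal{F}^W_t]$, so $M^{\mathbb{Q}}$ differs from $S$ only through one-window averages. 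Representing $M^{\mathbb{Q}}_t=s_0+\int_0^t\nu^{\mathbb{Q}}_s\,d\widehat{W}^{\mathbb{Q}}_s$ against the $\mathbb{Q}$-Brownian motion $\widehat{W}^{\mathbb{Q}}$ defines the \emph{effective volatility} $\nu^{\mathbb{Q}}$, with $M^{\mathbb{Q}}_T=S_T$ and $\mathbb{E}_{\mathbb{Q}}[f(S_T)]=\mathbb{E}_{\mathbb{Q}}[f(M^{\mathbb{Q}}_T)]$. The quantitative heart of the proof is the estimate, uniform over $\mathcal{M}^H$,
\[
\tfrac{H}{A}\,\mathcal{H}(\mathbb{Q}\mid\mathbb{P})\ \ge\ \tfrac{1}{2A\sigma^2}\,\mathbb{E}_{\mathbb{Q}}\Big[\int_0^{T}(\nu^{\mathbb{Q}}_s-\sigma)^2\,ds\Big]-o(1)\qquad(H\downarrow0),
\]
obtained by splitting $[0,T]$ into windows of length $H$, using that the $\mathbb{Q}$-drift of $S$ averages to zero over each window (by the relaxed martingale property) and Cauchy--Schwarz/Jensen applied to the Girsanov kernel on windows, the boundary window near $T$ contributing $o(1)$. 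I expect this step --- getting the constant right and the bound uniform --- to be the main obstacle.

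\emph{Step 3: upper bound.} Fix $o(1)$-optimal $\mathbb{Q}^H\in\mathcal{M}^H$. Since $f\ge0$ and $\pi\ge0$ we have $\tfrac{H}{A}\mathcal{H}(\mathbb{Q}^H\mid\mathbb{P})\le\mathbb{E}_{\mathbb{Q}^H}[f(S_T)]+o(1)$; feeding in $f(x)\le C(1+|x|^p)$, the Young inequality $|x|^p\le\beta x^2+c_\beta$, and the identity $\mathbb{E}_{\mathbb{Q}}[S_T^2]=s_0^2+\mathbb{E}_{\mathbb{Q}}\int_0^T(\nu^{\mathbb{Q}}_s)^2\,ds\le C'+C'H\,\mathcal{H}(\mathbb{Q}^H\mid\mathbb{P})$ from Step~2, and choosing $\beta$ small but fixed, one gets the a priori bounds $\sup_H H\,\mathcal{H}(\mathbb{Q}^H\mid\mathbb{P})<\infty$ and $\sup_H\mathbb{E}_{\mathbb{Q}^H}[S_T^2]<\infty$; in particular $\{f(S_T)\}_H$ is uniformly $\mathbb{Q}^H$-integrable. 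Then the laws of $(\widehat{W}^{\mathbb{Q}^H},\nu^{\mathbb{Q}^H},M^{\mathbb{Q}^H})$ are tight and, along a subsequence, converge to some $(\widehat{W},\nu,M)$ with $\widehat{W}$ a Brownian motion and $M_t=s_0+\int_0^t\nu_s\,d\widehat{W}_s$. Stability of stochastic integrals, the uniform integrability, and lower semicontinuity of the convex penalty $\nu\mapsto\mathbb{E}\int_0^T(\nu_s-\sigma)^2\,ds$ give
\[
\limsup_{H\downarrow0}\Big(\mathbb{E}_{\mathbb{Q}^H}[f(S_T)]-\tfrac{H}{A}\mathcal{H}(\mathbb{Q}^H\mid\mathbb{P})\Big)\le\mathbb{E}\Big[f\big(s_0+\int_0^T\nu_s\,d\widehat{W}_s\big)-\tfrac{1}{2A\sigma^2}\int_0^T(\nu_s-\sigma)^2\,ds\Big],
\]
and a truncation/conditioning argument together with the equivalence of the weak and strong formulations (only the joint law of $(\widehat{W},\nu)$ enters both terms) reduces the right-hand side to $\sup_{\nu\in\Gamma}(\cdots)$.

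\emph{Step 4: lower bound.} Conversely, given a bounded $(\mathcal{F}^W_t)$-predictable $\nu\in\Gamma$, it suffices to construct $\mathbb{Q}^H\in\mathcal{M}^H$ with effective volatility approximating $\nu$, so that $\mathbb{E}_{\mathbb{Q}^H}[f(S_T)]\to\mathbb{E}[f(s_0+\int_0^T\nu_s\,dW_s)]$ and $\tfrac{H}{A}\mathcal{H}(\mathbb{Q}^H\mid\mathbb{P})\to\tfrac{1}{2A\sigma^2}\int_0^T(\nu_s-\sigma)^2\,ds$. Discretising on a mesh $\delta\ll H$ one builds the Girsanov kernel window by window: the conditional $\mathbb{Q}^H$-drift of $S$ over each window is chosen to average to zero given the information $H$ earlier --- which places $\mathbb{Q}^H$ in $\mathcal{M}^H$ --- while the induced $H$-scale increments of $M^{\mathbb{Q}^H}$ carry volatility $\approx\nu$; the asymptotically entropy-minimal such choice is affine in $(\nu-\sigma)/\sigma$, which after multiplying the entropy by $H$ produces exactly the factor $1/(2A\sigma^2)$. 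Weak-convergence and stability estimates (again using~\eqref{2.0+} for uniform integrability) deliver the two displayed limits, and taking the supremum over $\nu\in\Gamma$ completes the proof.
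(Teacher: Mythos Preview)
Your overall architecture---dualize via entropy over relaxed martingale measures, then bound the scaling limit from above and below---matches the paper's, but the two load-bearing computations are only sketched and, as written, would not deliver the result.

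In Step~2, splitting into windows of length $H$ gives increments $\Delta_k=\int_{(k-1)H}^{kH}(\sigma\alpha_u+\mu)\,du$ that are $\QQ$-uncorrelated only once $|k-l|\ge 2$ (the relaxed martingale property conditions on $\cF^W_{(s-H)^+}$, not $\cF^W_s$); grouping into even/odd $k$ and applying Jensen yields $\mathcal H(\QQ\mid\PP)\ge\tfrac{1}{4\sigma^2 H}\E_{\QQ}[(S_T-s_0-\sigma W^{\QQ}_T)^2]$, off by a factor~$2$ from what you need. The paper (Lemma~\ref{lem:entropybound}) refines to windows of length $H/M$, rearranges the increments into $M{+}1$ groups that are genuinely uncorrelated within each group, and lets $M\to\infty$ to recover the sharp constant $\tfrac{1}{2\sigma^2 H}$. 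Note also that by It\^o's isometry this lower bound already equals $\tfrac{1}{2\sigma^2 H}\E_{\QQ}\int_0^T(\nu^{\QQ}_s-\sigma)^2ds$, so your Step~3 tightness/weak-convergence machinery is unnecessary: for each fixed $H$ the paper passes directly from the entropy bound to $\sup_{\nu}$ via a Skorokhod joint-law trick (using $U=W^{\QQ}_T-2W^{\QQ}_{T/2}$, cf.~\cite{S:76}) followed by It\^o representation under $\PP$. In Step~4, ``affine in $(\nu-\sigma)/\sigma$'' does not explain how the Girsanov drift can simultaneously satisfy the relaxed martingale property and generate volatility $\nu$. The paper takes the drift proportional to $\tfrac{\nu-\sigma}{\sigma H}\,\Phi(X^n_t-X^n_{(t-H)^+})$ with the odd bounded function $\Phi(z)=(-1)\vee z\wedge 1$: oddness plus symmetry of Brownian increments gives~\eqref{eq:51} for free, the linear part of $\Phi$ integrated over a window converts the $1/H$-scaled drift into the extra volatility $\nu-\sigma$, and $\E_{\QQ}[\Phi^2]\approx H$ makes $H\cdot\mathcal H(\QQ\mid\PP)\to\tfrac{1}{2\sigma^2}\E_{\PP}\int_0^T(\nu_t-\sigma)^2dt$. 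Finally, the paper never establishes the continuous-time duality you assert in Step~1; it uses only the easy inequality (Lemma~\ref{lem:easydualitybound}) for the lower bound and the discrete-time duality on a grid (Theorem~\ref{thm:duality}) for the upper bound, exploiting that restricting strategies to a grid can only raise the certainty equivalent.
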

\begin{rem}
The linear re-scaling of risk aversion $A/H$ is in line with the scaling-limit result for super-replication with delay \cite{DZ:20} where it was shown that the delay has to be re-scaled linearly in the time step.
A more formal reason for the scaling
$\frac{A}{H}$ can be found via duality with entropy minimization and the bound given by Lemma~\ref{lem:entropybound}.
\end{rem}
Next, as a corollary of Theorem \ref{thm.1} we obtain the following super-replication result which is the Bachelier version of Theorem 2.1 in \cite{DZ:20}.
\begin{cor}\label{cor.referee}
For any delay $H>0$, the super--replication price of $f(S_T)$
is given by
$\hat f(s_0)$
where $\hat f:\RR \to [0,\infty]$ denotes
the concave envelope of $f$. Moreover,
if $\hat f(s_0)<\infty$ then the buy-and-hold strategy $\gamma\equiv \partial_{+}\hat{f}(s_0)$ is an optimal super-hedge.
\end{cor}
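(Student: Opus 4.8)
The plan is to bracket the super--replication price $\pi_{\mathrm{sr}}(H)$ of $f(S_T)$ with delay $H$ --- by definition the infimum of those $x\in\RR$ for which $x+V^\gamma_T\ge f(S_T)$ $\P$--a.s.\ for some $\gamma\in\cA^H$ --- between $\hat f(s_0)$ from above and the scaling limit of Theorem~\ref{thm.1} from below, and then to compute that limit. For the upper bound, if $\hat f(s_0)=\infty$ there is nothing to prove; otherwise concavity forces $\hat f$ to be finite on all of $\RR$, so $c\set\partial_+\hat f(s_0)\in\RR$ is a supergradient of $\hat f$ at $s_0$, whence $\hat f(s_0)+c(y-s_0)\ge\hat f(y)\ge f(y)$ for every $y\in\RR$. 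Evaluating at $y=S_T$ shows that the buy--and--hold strategy $\gamma\equiv c$ --- admissible for any delay, with $N=1$ --- super--replicates $f(S_T)$ from initial wealth $\hat f(s_0)$, so $\pi_{\mathrm{sr}}(H)\le\hat f(s_0)$.

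For the lower bound I would first record two elementary facts. (i) $\pi(H,\lambda,f(S_T))\le\pi_{\mathrm{sr}}(H)$ for every $\lambda>0$: if $x+V^\gamma_T\ge f(S_T)$ with $\gamma\in\cA^H$ then, for any $\tilde\gamma\in\cA^H$, $\gamma+\tilde\gamma\in\cA^H$ and $\exp(-\lambda(V^{\gamma+\tilde\gamma}_T-f(S_T)))\le e^{\lambda x}\exp(-\lambda V^{\tilde\gamma}_T)$, so taking expectations and infimizing over $\tilde\gamma$ bounds the numerator of~\eqref{2.2} by $e^{\lambda x}$ times the denominator, i.e.\ $\pi(H,\lambda,f(S_T))\le x$. (ii) $\cA^H\subseteq\cA^\epsilon$ whenever $0<\epsilon\le H$ (a smaller delay only enlarges the admissible set), so $\pi_{\mathrm{sr}}(\epsilon)\le\pi_{\mathrm{sr}}(H)$. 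Combining (i), (ii) and Theorem~\ref{thm.1} --- applied with vanishing delay $\epsilon\downarrow0$ and risk aversion $A/\epsilon$ for fixed $A>0$ --- gives, for every $A>0$,
\begin{equation*}
\pi_{\mathrm{sr}}(H)\ \ge\ \pi_{\mathrm{sr}}(\epsilon)\ \ge\ \pi(\epsilon,A/\epsilon,f(S_T))\ \xrightarrow[\epsilon\downarrow0]{}\ V(A)\set\sup_{\nu\in\Gamma}\mathbb E_{\mathbb P}\!\left[f\!\left(s_0+\int_0^T\nu_t\,dW_t\right)-\frac{1}{2A\sigma^2}\int_0^T(\nu_t-\sigma)^2\,dt\right],
\end{equation*}
so $\pi_{\mathrm{sr}}(H)\ge\sup_{A>0}V(A)$.

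It then remains to establish $\sup_{A>0}V(A)=\hat f(s_0)$. The bound ``$\le$'' is immediate: for $\nu\in\Gamma$ the process $M^\nu_s\set s_0+\int_0^s\nu_t\,dW_t$, $s\in[0,T]$, is a uniformly integrable martingale with $\mathbb E_{\mathbb P}[M^\nu_T]=s_0$, and using the affine majorant $\hat f(x)\le\hat f(s_0)+c(x-s_0)$ (valid when $\hat f(s_0)<\infty$, the claim being trivial otherwise) one gets $\mathbb E_{\mathbb P}[f(M^\nu_T)]\le\mathbb E_{\mathbb P}[\hat f(M^\nu_T)]\le\hat f(s_0)$, hence $V(A)\le\hat f(s_0)$. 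For ``$\ge$'', fix $\delta>0$ and, by the one--dimensional characterization of the concave envelope, choose $a\le s_0\le b$ and $\lambda\in[0,1]$ with $\lambda a+(1-\lambda)b=s_0$ and $\lambda f(a)+(1-\lambda)f(b)\ge\min(\hat f(s_0),1/\delta)-\delta$; the degenerate case $a=s_0$ or $b=s_0$ is handled by $\nu\equiv0$, so assume $a<s_0<b$. For $R>0$ put $\tau_R\set\inf\{t\ge0:\ s_0+\sqrt R\,W_t\notin(a,b)\}\wedge T$ and $\nu^R_t\set\sqrt R\,1_{(0,\tau_R]}(t)$, which lies in $\Gamma$; then $M^{\nu^R}_T=s_0+\sqrt R\,W_{\tau_R}$ takes values in $[a,b]$, has mean $s_0$, equals one of $a,b$ on $\{\tau_R<T\}$ where $\P[\tau_R<T]\to1$ as $R\to\infty$ by Brownian scaling, and its overall exit location is $a$ with probability $(b-s_0)/(b-a)=\lambda$; hence $\Law(M^{\nu^R}_T)\to\lambda\delta_a+(1-\lambda)\delta_b$ weakly. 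Since $f$ is continuous, hence bounded on $[a,b]$, this gives $\mathbb E_{\mathbb P}[f(M^{\nu^R}_T)]\to\lambda f(a)+(1-\lambda)f(b)$, while $\mathbb E_{\mathbb P}\int_0^T(\nu^R_t-\sigma)^2\,dt\le(\sqrt R+\sigma)^2T<\infty$. Choosing $R$ large and then letting $A\to\infty$ yields $\sup_{A>0}V(A)\ge\lambda f(a)+(1-\lambda)f(b)\ge\min(\hat f(s_0),1/\delta)-\delta$, and $\delta\downarrow0$ gives $\sup_{A>0}V(A)\ge\hat f(s_0)$. Together with the first two paragraphs this yields $\pi_{\mathrm{sr}}(H)=\hat f(s_0)$, with $\gamma\equiv\partial_+\hat f(s_0)$ an optimal super--hedge when $\hat f(s_0)<\infty$.

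The routine ingredients are the elementary inequalities and the monotonicity $\cA^H\subseteq\cA^\epsilon$ of the second paragraph together with the convexity/finiteness facts about $\hat f$; the main obstacle is the last paragraph, i.e.\ the (necessarily only approximate) steering of a \emph{bounded}, $(\cF^W_t)$--predictable integrand so that $s_0+\int_0^T\nu_t\,dW_t$ has a near--optimal two--point law, the crucial point being that once such a $\nu$ is fixed the quadratic penalty is of order $1/A$ and thus vanishes as $A\to\infty$.
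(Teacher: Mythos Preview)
Your proof is correct and follows the same overall route as the paper: an upper bound via the buy--and--hold supergradient strategy, and a lower bound by combining monotonicity of the super--replication price in $H$, domination of indifference prices by super--replication prices, Theorem~\ref{thm.1}, and finally sending $A\uparrow\infty$. The only substantive difference is in the last step: the paper first passes to $\lim_{A\uparrow\infty}V(A)=\sup_{\nu\in\Gamma}\E_\P[f(s_0+\int_0^T\nu_t\,dW_t)]$ (immediate since each $\nu\in\Gamma$ is bounded, so the penalty vanishes) and then invokes an external reference (\cite{N:18}, Lemma~3.3) to identify this control value with $\hat f(s_0)$, whereas you establish $\sup_{A>0}V(A)=\hat f(s_0)$ directly by constructing controls $\nu^R=\sqrt R\,1_{(0,\tau_R]}$ that drive $M^{\nu^R}_T$ to an approximate two--point law. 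Your argument is thus more self-contained; the paper's is shorter by outsourcing this standard stochastic control identity.
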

\begin{proof}
First, if $\hat f(s_0)<\infty$ then from concavity $\hat f(s)<\infty$ everywhere and we obtain
$$\hat f(s_0)+\partial_{+}\hat{f}(s_0)(S_T-s_0)\geq\hat f(S_T)\geq f(S_T) \ \ \mbox{a.s.}$$
In other words the strategy with initial capital $\hat f(s)$ and
$\gamma\equiv \partial_{+}\hat{F}(s)$
is a super-hedge. To complete the proof it thus remains to establish that
the super--replication price for delay $H>0$, which we denote by $\Pi(H)$, is no less than $\hat f(s_0)$. Clearly, $\Pi(H)$ is non-increasing in $H$ and any super-replication price is larger than the corresponding utility indifference price. In particular, we have $\Pi(H)\geq \Pi(h)\geq \pi(h,A/h,F)$ for any $h<H$ and $A>0$.
Theorem \ref{thm.1} allows us to let $h \downarrow 0$ and then $A \uparrow \infty$ in this inequality to conclude
\begin{align*}
 \Pi(H) &\geq \lim_{A \uparrow \infty}  \sup_{\nu\in\Gamma}\mathbb E_{\mathbb P}\left[ f\left(s_0+\int_{0}^T \nu_t dW_t\right)- \frac{1}{2A\sigma^2}\int_{0}^T\left(\nu_t-\sigma\right)^2dt \right]\\
 &=\sup_{\nu\in\Gamma}\mathbb E_{\mathbb P}\left[ f\left(s_0+\int_{0}^T \nu_t dW_t\right)\right]=\hat f(s_0),
\end{align*}
where the last equality is well-known in stochastic control; see, e.g., \cite{N:18}, Lemma~3.3 for a proof.
\end{proof}

\begin{rem}
Our otherwise standing continuity and growth assumptions on $f$
can be relaxed in Corollary~\ref{cor.referee} to just lower-semicontinuity. Indeed, the proposed buy-and-hold strategy is then still a super-hedge. To obtain its optimality, we consider an increasing sequence of continuous and bounded payoff profiles $f_n:\mathbb R\rightarrow [0,\infty)$ converging to $f$ from below. Their super-replication prices are obviously not larger than that of $f$ and, by the already established version of Corollary~\ref{cor.referee}, they are given by their concave envelopes
$\hat{f}_n(s_0)\uparrow\hat{f}(s_0)$.
\end{rem}

The proof of Theorem~\ref{thm.1} rests on the duality theory of exponential utility which we will develop for discrete-time information delayed models in Section~\ref{sec:duality}. As in~\cite{KS:06}, it identifies dual probability measures via a relaxation of the martingale property. A lower bound for the scaling limit is then obtained in Section~\ref{sec:lowerbound} by constructing suitable relaxed martingale measures which asymptotically generate a prescribed volatility profile $\nu$. Of course, the entropy of these measures relative to $\P$ explodes (unless $\nu = \sigma$), but we show in~\eqref{eq:333} that the re-scaled risk aversion compensates this adequately in our scaling limit. An upper bound is established in Section~\ref{sec:upperbound} by a suitable lower bound for the entropy of such a relaxed measure which is argued similar to It\^{o}'s isometry, albeit only aggregating 2nd moments from sufficiently distant increments in a carefully chosen way.

Let us finish this section with the following two remarks on how one could compute the right-hand side of~\eqref{2.3}.
\begin{rem}\begin{enumerate}
\item \emph{Connection with optimal transport.} If the distribution of the random variable
$s+\int_{0}^T \nu_u dW_u$
is fixed then from the It\^{o} Isometry we get that the optimal $\nu$
in the right hand side of~\eqref{2.3} is the one which maximizes (under the constraint on the distribution
of $s+\int_{0}^T \nu_u dW_u$)
the expectation $\mathbb E_{\mathbb P}\left[\int_{0}^T \nu_u du\right]$. From Theorem~1.10
in \cite{BBHK:20} we get that the optimal $\nu$ integrates to a random variable of the form
$\int_{0}^T \nu_u dW_u=\zeta(W_T)$
for some appropriate $\zeta:\mathbb R\rightarrow\mathbb R$.
Thus, letting $\mu\sim W_T$ denote, the normal distribution with mean zero and variance $T$, It\^{o}'s representation theorem isometry show that the right hand side of~\eqref{2.3} can be represented as
$$\sup_{\zeta:\mathbb R\rightarrow\mathbb R \text{ with } \int_{\mathbb R} \zeta(z)\mu(dz)=0 } \int_{\mathbb R} \left(f(s+\zeta(z))-\frac{1}{2A\sigma^2}[\zeta(z)-\sigma z]^2\right)\mu(dz).$$
This is a deterministic optimization problem which can be analyzed with
the classical techniques of calculus of variations.

\item \emph{Connection with nonlinear PDE.} The right-hand side of~\eqref{2.3} can also be viewed as the Bachelier version of the stochastic control problem
which is given by the right-hand side of formula~(3) in
\cite{L:18}. That paper studies
hedging of a European option in the Black--Scholes model with linear price impact and describes the option's replication price  by a stochastic control problem similar to the right hand side of~\eqref{2.3}.
Moreover, the author studies in detail the PDE which corresponds to such a control problem. We have to leave open how to give a financial interpretation of the similarity between the mathematical form of our scaling limit representation~\eqref{2.3} and the
replication price formula of \cite{L:18}.
\end{enumerate}
\end{rem}

\section{Proof of Theorem \ref{thm.1}}\label{sec:3}


The proof of Theorem~\ref{thm.1} will be done by estimating the scaling $\liminf$ from below in Section~\ref{sec:lowerbound} and the scaling $\limsup$ from above in~Section~\ref{sec:upperbound}. Both sections rely on duality results for discrete-time models with information delay that we collect in the preparatory Section~\ref{sec:duality} that we turn to next.

\subsection{Duality in models with information delay}\label{sec:duality}

The proof of Theorem~\ref{thm.1} relies on the duality theory for exponential utility maximization with information delay. We will combine results from \cite{KS:06} and \cite{Fritelli:00} to this end.

Taking an abstract view, let us consider a price process $S$ adapted to a filtration $(\cF_t)$ and an investor who will make investment decision using a filtration $(\cG_t)$ which may be strictly contained in $(\cF_t)$ at any time. The payoff will also be given by a general random variable $F$.

\begin{lem}\label{lem:easydualitybound}
 Let $\gamma$ be a $(\cG_t)_{t \in \TT}$-predictable strategy that changes value only at times in $\TT=\{0=t_0<\dots<t_N\leq T\}$ and suppose that $\E_{\P}[\exp(\lambda |F|)]<\infty$. Then
 \begin{align}\label{eq:53}
     \frac{1}{\lambda} \log \E_{\P}\left[\exp\left(-\lambda(V^\gamma_T-F)\right)\right] \geq \E_{\QQ}\left[F-\frac{1}{\lambda} \log \frac{d\QQ}{d\P}\right]
 \end{align}
 for any probability $\QQ\sim\P$ with finite entropy $\E_{\QQ}[\log \frac{d\QQ}{d\PP}]<\infty$ relative to $\P$ such that for all $u \in \TT$  we have $S_u \in L^1(\QQ)$ and
 \begin{align}\label{eq:51}
     \E_{\QQ}\left[S_t-S_s\middle| \cG_s\right]=0 \text{ for any } s,t \in \TT \text{ with } s \leq t.
 \end{align}
\end{lem}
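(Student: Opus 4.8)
The plan is to derive \eqref{eq:53} from two standard ingredients: the variational (Gibbs) bound linking exponential $\P$-moments with relative entropy, and the relaxed martingale property \eqref{eq:51}. The key point is that \eqref{eq:51} only involves conditional expectations given the filtration $(\cG_t)$, which is precisely what is needed because $\gamma$ is $(\cG_t)$-predictable, so each increment of $V^\gamma$ will be treated by conditioning on $\cG_{t_{i-1}}$.

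First I would dispose of the trivial case $\E_\P[\exp(-\lambda(V^\gamma_T-F))]=+\infty$ and, in the remaining case, record that $F\in L^1(\QQ)$. Applying the elementary entropy inequality $\E_\QQ[Z]\le\log\E_\P[e^{Z}]+\E_\QQ[\log\tfrac{d\QQ}{d\PP}]$ with $Z=\lambda\abs{F}$ and using $\E_\P[e^{\lambda\abs{F}}]<\infty$ together with the finite relative entropy of $\QQ$ gives $\E_\QQ[\lambda\abs{F}]<\infty$, so the right-hand side of \eqref{eq:53} is a well-defined real number.

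For the core estimate, since $\QQ\sim\PP$ I would write, using $\tfrac{d\PP}{d\QQ}=\exp(-\log\tfrac{d\QQ}{d\PP})$ and Jensen's inequality for $\exp$,
\begin{align*}
 \E_\P\!\left[e^{-\lambda(V^\gamma_T-F)}\right]
 &= \E_\QQ\!\left[\exp\!\Big(-\log\tfrac{d\QQ}{d\PP}-\lambda V^\gamma_T+\lambda F\Big)\right]\\
 &\ge \exp\!\Big(\E_\QQ\!\big[-\log\tfrac{d\QQ}{d\PP}-\lambda V^\gamma_T+\lambda F\big]\Big),
\end{align*}
so that, taking logarithms and dividing by $\lambda>0$,
\[
 \frac1\lambda\log\E_\P\!\left[e^{-\lambda(V^\gamma_T-F)}\right]
 \;\ge\; \E_\QQ\!\left[F-\tfrac1\lambda\log\tfrac{d\QQ}{d\PP}\right]-\E_\QQ\big[V^\gamma_T\big].
\]
It then remains only to show $\E_\QQ[V^\gamma_T]=0$. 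Writing $V^\gamma_T=\sum_{i=1}^N\gamma_{t_i}(S_{t_i}-S_{t_{i-1}})$ with $\gamma_{t_i}$ being $\cG_{t_{i-1}}$-measurable and $S_{t_{i-1}},S_{t_i}\in L^1(\QQ)$ (as $t_{i-1},t_i\in\TT$), property \eqref{eq:51} yields $\E_\QQ[S_{t_i}-S_{t_{i-1}}\mid\cG_{t_{i-1}}]=0$; conditioning each summand on $\cG_{t_{i-1}}$ and pulling out the measurable factor $\gamma_{t_i}$ gives $\E_\QQ[\gamma_{t_i}(S_{t_i}-S_{t_{i-1}})]=0$, and summing over $i$ finishes the argument.

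The point I expect to require the most care is integrability, since $\gamma$ is not assumed bounded: a priori $\gamma_{t_i}(S_{t_i}-S_{t_{i-1}})$ need not lie in $L^1(\QQ)$, nor is the exponent in the Jensen step obviously $\QQ$-quasi-integrable. I would handle this by a localization: run the argument for the bounded truncations $\gamma^{(n)}_{t_i}:=(-n)\vee\gamma_{t_i}\wedge n$ (for which both displays above, in particular $\E_\QQ[V^{\gamma^{(n)}}_T]=0$, are immediate), and then let $n\to\infty$, using that the conditional identity $\E_\QQ[(S_{t_i}-S_{t_{i-1}})^{+}\mid\cG_{t_{i-1}}]=\E_\QQ[(S_{t_i}-S_{t_{i-1}})^{-}\mid\cG_{t_{i-1}}]$ makes the positive and negative parts of each $\gamma_{t_i}(S_{t_i}-S_{t_{i-1}})$ carry equal $\QQ$-mass, and that finiteness of the left-hand side of \eqref{eq:53} already forces $(V^\gamma_T)^-\in L^1(\QQ)$ (tilt $\P$ by $e^{-\lambda(V^\gamma_T-F)}$ and invoke non-negativity of relative entropy). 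Together these pin down $\E_\QQ[V^\gamma_T]=0$ and hence \eqref{eq:53}.
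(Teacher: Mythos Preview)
Your approach is essentially the same as the paper's: both reduce the claim to the Gibbs-type inequality plus $\E_\QQ[V^\gamma_T]=0$, and both establish the latter starting from $(V^\gamma_T)^-\in L^1(\QQ)$. The paper obtains the main inequality via the Legendre--Fenchel bound $xy\le e^x+y(\log y-1)$ with a free multiplier $y>0$ (equivalent to your Jensen step after optimizing in $y$), and derives $(V^\gamma_T)^-\in L^1(\QQ)$ by the same duality combined with Cauchy--Schwarz rather than by tilting. The one place where your sketch is looser than the paper is the final ``pin down'' of $\E_\QQ[V^\gamma_T]=0$: the term-wise equality $\E_\QQ[X_i^+]=\E_\QQ[X_i^-]$ together with $(V^\gamma_T)^-\in L^1(\QQ)$ does not by itself control the sum, and the truncation $\gamma^{(n)}\to\gamma$ gives Fatou in the wrong direction; the paper closes this via an explicit backward induction on generalized conditional expectations $\hat V_{t_k}:=\E_\QQ[V^\gamma_T\mid\cG_{t_k}]$, which is precisely what your ingredients are set up for.
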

\begin{rem}
 \cite{KS:06} show that existence of a probability measure $\QQ\sim\P$ with the relaxed martingale property~\eqref{eq:51} is equivalent to the absence of arbitrage when investors are confined to $(\cG_t)$-predictable strategies acting only at times from $\TT$. The relaxation accounts for the information differential between the filtration $(\cF_t)$ to which $S$ is adapted and the filtration $(\cG_t)$ that the investor's trades are based upon.
\end{rem}
As usual, the proof rests on the classical Legendre-Fenchel duality inequality
\begin{align}\label{eq:2}
    xy \leq e^x + y(\log y-1), \quad x \in \RR, y>0, \quad \text{ with `$=$' iff } y=e^x.
\end{align}
For instance, using this inequality with $x=\lambda F$, $y=d\QQ/d\P$ yields
\begin{align*}
    \lambda F \frac{d\QQ}{d\P} \leq e^{\lambda F} + \frac{d\QQ}{d\P}\log \frac{d\QQ}{d\P} \in L^1(\P)
\end{align*}
so that the right-hand side in~\eqref{eq:53} is always well-defined under our assumptions on $F$ and $\QQ$.
\begin{proof}

Without loss of generality we can assume that the expectation on the left-hand side of~\eqref{eq:53} is finite. Note that then by~\eqref{eq:2} and the Cauchy-Schwarz inequality
\begin{align*}
    \E_\QQ\left[\frac{\lambda}{2} (V^\gamma_T)^-\right]
    &=\E_\P\left[\frac{\lambda}{2} (V^\gamma_T)^-\frac{d\QQ}{d\P}\right]
    \leq \E_\P\left[e^{\frac{\lambda}{2} (V^\gamma_T)^-}+\frac{d\QQ}{d\P}\left(\log\frac{d\QQ}{d\P}-1\right)\right]\\    &\leq 1+\E_\P\left[e^{-\frac{\lambda}{2} (V^\gamma_T-F)}e^{-\frac{\lambda}{2}F}\right]+\E_{\QQ}\left[\log\frac{d\QQ}{d\P}-1\right]\\
    &\leq \left(\E_\P\left[e^{-\lambda (V^\gamma_T-F)}\right]\right)^{1/2}\left(\E_{\P}\left[e^{-{\lambda}F}\right]\right)^{1/2}+\E_{\QQ}\left[\log\frac{d\QQ}{d\P}\right]
\end{align*}
is finite under our assumptions on $F$ and $\QQ$.

We will next use a variant of a classical martingale transform argument (see, e.g., \cite{FollmerSchied:16}, Theorem~5.14) to conclude from this that actually $V^\gamma_T$ is $\QQ$-integrable with $\E_\QQ[V^\gamma_T]=0$. Indeed, using (a priori) generalized conditional expectations (as defined in~\cite{S}), we can now deduce by Jensen's inequality that $\hat{V}_{T}:=\E_\QQ\left[V^\gamma_T\middle|\cG_T\right]$ exists with $\hat{V}_{T}^- \in L^1(\QQ)$. Hence, $\hat{V}_{t_{N-1}}:=\E_\QQ\left[\hat{V}_{t_N}\middle|\cG_{t_{N-1}}\right]$ exists which, due to~\eqref{eq:51}, can be seen to be equal to $\E_\QQ\left[V^\gamma_{t_{N-1}}\middle|\cG_{t_{N-1}}\right]$; by Jensen's inequality we have again $\hat{V}_{t_{N-1}}^-\in L^1(\QQ)$. We can now proceed inductively to conclude that $\hat{V}_0:=\E_\QQ\left[\hat{V}_{t_1}\middle|\cG_0\right]$ exists and coincides with
$\E_{\QQ}[V^\gamma_0]=0$. Reversing the course of our induction argument, it follows that $\hat{V}_{t_1}$ is $\QQ$-integrable with vanishing expectation, a property that inductively follows finally also for $\hat{V}_{T}$ and, thus, for $V^\gamma_T$ itself.

To conclude our claim~\eqref{eq:53} we can now use~\eqref{eq:2} again to find for any $y>0$ the estimate
\begin{align*}
    \E_{\P}\left[\exp\left(-\lambda(V^\gamma_T-F)\right)\right]
    &=\E_{\P}\left[\left(e^{-\lambda V^\gamma_T}+\lambda V^\gamma_T y e^{-\lambda F} \frac{d\QQ}{d\P}\right)e^{\lambda F}\right] \\
    &\geq \E_{\P}\left[\left(-y e^{-\lambda F}\frac{d\QQ}{d\P}\left(\log \left(ye^{-\lambda F} \frac{d\QQ}{d\P}\right)-1\right)\right)e^{\lambda F}\right] \\
    &= -y\left(\log y -1\right)+\E_{\QQ}\left[\lambda F-\log \frac{d\QQ}{d\P}\right]y.
\end{align*}
Using that due to~\eqref{eq:2} the supremum over $y>0$ in  the last expression is $e^x$ for $x=\E_{\QQ}\left[\lambda F-\log \frac{d\QQ}{d\P}\right]$  yields the result.
\end{proof}

The above duality estimate turns out to be tight in the sense that there is no duality gap  at least for discrete-time settings:
\begin{thm}\label{thm:duality}
 For $\TT=\{0=t_0<\dots<t_N\leq T\}$, let $\cA(\TT)$ be the set of $(\cG_t)_{t \in \TT}$-predictable strategies that only trade at times from $\TT$ and let $\cQ(\TT)$ denote the set of measures $\QQ\sim \P$ with bounded entropy relative to $\P$ that exhibit the relaxed martingale property~\eqref{eq:51} for $S$. Suppose in addition that there is $\epsilon>0$ such that $\E_{\P}[\exp(\lambda(1+\epsilon) |F|)]<\infty$ and $\E_{\P}[\exp(\epsilon|S_t|^{1+\epsilon})]<\infty$, $t \in \TT$. Finally, assume absence of arbitrage in the relaxed sense that $\cQ(\TT)\not=\emptyset$.

 Then the certainty equivalent of $F$ has the dual representation
  \begin{align}\label{eq:61}
    \inf_{\gamma \in \cA(\TT)} \frac{1}{\lambda} \log \E_{\P}\left[\exp\left(-\lambda(V^\gamma_T-F)\right)\right] = \sup_{\QQ \in \cQ(\TT)} \E_{\QQ}\left[F-\frac{1}{\lambda} \log \frac{d\QQ}{d\P}\right]
 \end{align}
 and the latter supremum is actually attained.
\end{thm}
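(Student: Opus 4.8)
The inequality ``$\geq$'' in~\eqref{eq:61} follows from Lemma~\ref{lem:easydualitybound} by passing to the infimum over $\gamma\in\cA(\TT)$ and the supremum over $\QQ\in\cQ(\TT)$; moreover both sides are finite, the right one because the Legendre--Fenchel bound~\eqref{eq:2} makes each $\E_\QQ[F-\tfrac1\lambda\log\tfrac{d\QQ}{d\P}]$ finite and $\cQ(\TT)\neq\emptyset$, the left one because $\gamma\equiv 0$ is admissible with finite value $\tfrac1\lambda\log\E_\P[e^{\lambda F}]$. So the plan is to produce a single $\hat\QQ\in\cQ(\TT)$ for which~\eqref{eq:53} holds as an equality; this proves at once the reverse inequality ``$\leq$'' and the attainment of the supremum. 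I would reach $\hat\QQ$ by first solving the primal exponential-utility problem and then reading it off from its first-order optimality condition, in the spirit of~\cite{Fritelli:00}.

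\textbf{Step 1 (attainment in the primal).} Let $\cK:=\{V^\gamma_T:\gamma\in\cA(\TT)\}$, a linear subspace of $L^0(\Omega,\cF,\P)$ spanned by $(S_{t_i}-S_{t_{i-1}})\setind{A}$ with $A\in\cG_{t_{i-1}}$, $i=1,\dots,N$. By~\cite{KS:06}, the assumption $\cQ(\TT)\neq\emptyset$ entails the absence of arbitrage for $(\cG_t)$-predictable trading on $\TT$, and as in the classical discrete-time theory (cf.~\cite{FollmerSchied:16}) this implies that $\cK$ is closed in probability and that any sequence $(\gamma^n)\subset\cA(\TT)$ with $\sup_n\E_\P[e^{-\lambda(V^{\gamma^n}_T-F)}]<\infty$ has $(V^{\gamma^n}_T)$ bounded in $L^0(\P)$; the integrability assumptions on $S$ and $F$ enter here to control the effect of the non-compact reference measure. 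Taking a minimizing sequence and passing to forward convex combinations (which stay in $\cK$ and do not increase the convex functional $k\mapsto\E_\P[e^{-\lambda(k-F)}]$), we may assume $V^{\gamma^n}_T\to\hat V$ $\P$-a.s.\ with $\hat V\in\cK$, so that by Fatou's lemma $\E_\P[e^{-\lambda(\hat V-F)}]=\inf_{\gamma\in\cA(\TT)}\E_\P[e^{-\lambda(V^\gamma_T-F)}]=:c\in(0,\infty)$.

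\textbf{Step 2 (the optimality condition yields $\hat\QQ$).} Put $\tfrac{d\hat\QQ}{d\P}:=c^{-1}e^{-\lambda(\hat V-F)}>0$, so $\hat\QQ\sim\P$. Minimality of $\hat V$ in the linear space $\cK$ gives the variational identity $\E_\P[V^\eta_T\,e^{-\lambda(\hat V-F)}]=0$, i.e.\ $\E_{\hat\QQ}[V^\eta_T]=0$, for every $\eta\in\cA(\TT)$; taking $\eta$ supported on the single period $(t_{i-1},t_i]$ with $\gamma_i=\setind{A}$, $A\in\cG_{t_{i-1}}$, this reads $\E_{\hat\QQ}[(S_{t_i}-S_{t_{i-1}})\setind{A}]=0$, and the relaxed martingale property~\eqref{eq:51} for general $s\leq t$ in $\TT$ then follows by the tower property, using $\cG_{t_{i-1}}\subseteq\cG_{t_j}$ for $j\geq i-1$. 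The integrability assumptions (in particular the $1+\epsilon$ margin on $F$, which through Step~1 is inherited as enough integrability of $\hat V$) guarantee $F,\,S_u\in L^1(\hat\QQ)$ and $H(\hat\QQ|\P)<\infty$; indeed $H(\hat\QQ|\P)=\E_{\hat\QQ}[-\lambda(\hat V-F)-\log c]=\lambda\E_{\hat\QQ}[F]-\log c$ once one knows $\E_{\hat\QQ}[\hat V]=0$, which is the martingale-transform argument from the proof of Lemma~\ref{lem:easydualitybound}. Hence $\hat\QQ\in\cQ(\TT)$, and
\begin{align*}
 \E_{\hat\QQ}\!\left[F-\tfrac1\lambda\log\tfrac{d\hat\QQ}{d\P}\right]
 &=\E_{\hat\QQ}[\hat V]+\tfrac1\lambda\log c
 =\tfrac1\lambda\log c\\
 &=\inf_{\gamma\in\cA(\TT)}\tfrac1\lambda\log\E_\P\!\left[e^{-\lambda(V^\gamma_T-F)}\right],
\end{align*}
which combined with ``$\geq$'' gives~\eqref{eq:61} and shows the supremum is attained at $\hat\QQ$.

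\textbf{Expected main obstacle.} The crux is Step~1: showing that minimizing sequences are bounded in $L^0(\P)$ and that $\cK$ is closed in probability. This is where the no-arbitrage hypothesis $\cQ(\TT)\neq\emptyset$ is used in an essential way, and while it mirrors the classical discrete-time utility-maximization arguments (cf.~\cite{Fritelli:00, FollmerSchied:16}), the delay forces one to run the period-by-period reasoning with $(\cG_t)$ in place of $(\cF_t)$, and one must track how the integrability assumptions on $S$ and $F$ — in particular the $1+\epsilon$ margin, which keeps the problem well-posed for a slightly larger risk aversion — propagate to enough integrability of the optimizer $\hat V$ for $\hat\QQ$ to lie in $\cQ(\TT)$. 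A secondary point is justifying the first-order condition in Step~2 for unbounded test strategies $V^\eta_T$, since $\theta\mapsto\E_\P[e^{-\lambda(\hat V+\theta V^\eta_T-F)}]$ need not be finite off $\theta=0$; this is handled in the usual way by first restricting to bounded strategies and then removing the restriction, using that $\hat V$ and each $V^\eta_T$ belong to $L^1(\hat\QQ)$.
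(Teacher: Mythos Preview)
Your route is a valid alternative but genuinely different from the paper's. The paper proceeds \emph{dual-first}: after reducing to $\lambda=1$ and $F=0$ via the change of measure $d\P^F/d\P\propto e^{\lambda F}$, it establishes attainment of the supremum directly by a Komlos argument on a maximizing sequence of densities $Z_n=d\QQ_n/d\P$; the exponential integrability $\E_\P[\exp(\epsilon|S_t|^{1+\epsilon})]<\infty$ is used precisely to obtain uniform integrability of $(Z_n S_t)_n$ so that the $L^1(\P)$-limit $\QQ_0$ inherits the relaxed martingale property. The reverse inequality ``$\leq$'' is then read off from the structure of the entropy minimizer $Z_0=e^{-f_0}/\E_\P[e^{-f_0}]$, invoking the Kabanov--Stricker closure theorem (Theorem~1 in~\cite{KS:06}) to realize $f_0$ as $V^{\gamma_0}_T-R_0$ with $R_0\geq 0$.

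Your \emph{primal-first} approach---attain the infimum by Komlos on portfolio values, then define $\hat\QQ$ from the first-order condition---is equally standard in spirit and yields the optimal hedge as a by-product. What it buys is directness; what it costs is exactly what you flag. The $L^0$-closedness of $\cK$ and boundedness of minimizing sequences under the delayed filtration are available from~\cite{KS:06}, so Step~1 is fine. The genuine pressure point is Step~2: even for bounded $\eta$ the gains $V^\eta_T$ are unbounded because $S$ is, so differentiating $\theta\mapsto\E_\P[e^{-\lambda(\hat V+\theta V^\eta_T-F)}]$ at $\theta=0$ needs a domination that in effect asks for $\E_\P[e^{-\lambda(1+\delta)(\hat V-F)}]<\infty$ for some $\delta>0$, and likewise verifying $S_t\in L^1(\hat\QQ)$ and $H(\hat\QQ|\P)<\infty$ requires more than just $\E_\P[e^{-\lambda(\hat V-F)}]<\infty$. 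This is where the $(1+\epsilon)$-margins on both $F$ and $|S_t|$ must be spent, and the argument is not circular but does need to be written out carefully (e.g.\ via a Donsker--Varadhan bound $\E_{\hat\QQ}[X]\leq \log\E_\P[e^X]+H(\hat\QQ|\P)$ once the latter is controlled). The paper sidesteps these issues by working on the dual side, where entropy gives compactness for free and the integrability of $S$ enters only through a single clean uniform-integrability estimate.
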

\begin{rem}
The literature on exponential utility maximization mostly considers (locally) bounded asset prices and claims and provides examples for non-existence of an entropy-minimizing martingale measure if these conditions are violated; see, e.g., \cite{Fritelli:00}. In particular it does not cover the discrete-time version of our Bachelier model. Neither does this literature cover information delay. So, even though the above result is not particularly surprising, we include it for the sake of completeness.
\end{rem}
\begin{proof}
That `$\geq$' holds true in~\eqref{eq:61} is immediate from Lemma~\ref{lem:easydualitybound}.
Let us next observe that it is sufficient to treat the case $\lambda =1$ by passing to $\lambda F$ and that without loss of generality we can assume $F=0$ by passing from $\P$ to $\P^F$ with $d\P^F/d\P = e^{\lambda F}/\E_{\P}[e^{\lambda  F}]$ for which  $$\E_{\QQ}\left[\log \frac{d\QQ}{d\P^F}\right]=\E_{\QQ}\left[\log \frac{d\QQ}{d\P}-\lambda F\right],  \quad \QQ \in \cQ(\TT);$$
moreover, we have $\E_{\P^F}[\exp(\epsilon'|S_t|^{1+\epsilon'})] <\infty$ for any $\epsilon' \leq \epsilon/(1+\epsilon)$ by virtue of H{\"o}lder's inequality and the exponential integrability assumptions on $S$ and $F$ under $\P$.

To prove that the supremum in~\eqref{eq:61} is attained, we use the well-known Komlos-argument to obtain a maximizing sequence $\QQ_n \in \cQ(\TT)$, $n=1,2,...$, for which $Z_n := d\QQ_n/d\P$ converges almost surely; in fact, convergence holds also in $L^1(\P)$ because $H(\QQ_n|\P):=\E_\P\left[Z_n\log Z_n\right]=\E_{\QQ_n}\left[\log \frac{d\QQ_n}{d\PP}\right]$ can be assumed bounded in $n=1,2,\dots$ without loss of generality. Hence, $Z_0 := \lim_n Z_n$ yields the density of a probability measure $\QQ_0$ which by Fatou's lemma satisfies $H(\QQ_0|\P)\leq \lim_n H(\QQ_n|\P)$. So, attainment of the supremum in~\eqref{eq:61} is proven if we can argue that $\QQ_0$ is a martingale measure for $S$ in the relaxed sense of~\eqref{eq:51}. For this we show below that
\begin{align}\label{eq:4}
    R_M:= \max_{t \in \TT} \sup_{n=0,1,\dots}  \E_{\QQ_n}[|S_t|1_{\{|S_t|>M\}}] \to 0 \text{ as } M\uparrow \infty.
\end{align}
We then can estimate, for any $A \in \cF$ and $t \in \TT$,
\begin{align*}
    \left|\E_{\QQ_n}\left[S_t1_A\right]\right.&\left.-\E_{\QQ_0}\left[S_t1_A\right]\right| \\& \leq  \left|\E_{\QQ_n}\left[(-M \vee S_t \wedge M)1_A\right]-\E_{\QQ_0}\left[(-M \vee S_t \wedge M)1_A\right]\right|+ 2R_M
\end{align*}
to conclude from $Z_n \rightarrow Z_0$ in $L^1(\P)$ that
\begin{align*}
    \E_{\QQ_n}\left[S_t1_A\right] \xrightarrow[n \uparrow \infty]{} \E_{\QQ_0}\left[S_t 1_A\right], \quad A \in \cF.
\end{align*}
Exploiting the relaxed martingale property of $S$ under each of the $\QQ_n$, we conclude for any $t \in \TT$ and $A \in \cG_t$ the required martingale identity
\begin{align*}
    \E_{\QQ_0}[S_T1_A]=\lim_n \E_{\QQ_n}[S_T1_A]= \lim_n \E_{\QQ_n}[S_t1_A] =  \E_{\QQ_0}[S_t1_A].
\end{align*}

It remains to prove~\eqref{eq:4}. For this we let $q > 1$ be the conjugate of $p>1$ with $1/p+1/q=1$ to apply H{\"o}lder's inequality:
\begin{align}
    \E_{\QQ_n}[|S_t|1_{\{|S_t|>M\}}] &= \E_{\P}[(Z_n)^{1/p}|S_t| (Z_n)^{1/q}1_{\{|S_t|>M\}}] \nonumber \\&\label{eq:5}
    \leq (\E_{\P}[Z_n |S_T|^p])^{1/p}(\E_{\QQ_n}[1_{\{|S_t|>M\}}])^{1/q}.
\end{align}
Now, we recall $\epsilon>0$ from  our exponential integrability assumptions on $|S|^{1+\epsilon}$ and choose $p:=1+\epsilon$. Due to the Legendre-Fenchel duality estimate~\eqref{eq:2}, we have
\begin{align*}
    Z_n \epsilon |S_t|^p \leq e^{\epsilon |S_t|^{1+\epsilon}} + Z_n (\log Z_n-1)
\end{align*}
and we can thus bound the first expectation in~\eqref{eq:5} uniformly in $n=0,1,\dots$. Due to $\E_{\QQ_n}[1_{\{|S_t|>M\}}] \leq \E_{\QQ_n}[|S_t|]/M=\E_{\P}[Z_n |S_t|]/M$, the second one vanishes uniformly in $n=0,1,\dots$ for $M \uparrow \infty$ because we already have shown that $(Z_n|S_t|)_{n=1,2,\dots}$  is bounded in $L^1(\P)$, which by Fatou's lemma also yields $\E_{\QQ_0}[|S_t|]=\E_{\P}[Z_0|S_t|]<\infty$.  Since $\TT$ is finite, it follows that $\lim_M R_M = 0$ as claimed.

The proof of `$\leq$' in~\eqref{eq:61} can now be done following the approach in \cite{Fritelli:00} that also works with the relaxed martingale condition rather than the usual one. Indeed, as outlined in the proof of Theorem~2.3 there, the first order condition for the entropy minimizing measure reveals its density to be of the form $Z_0=\exp(-f_0)/\E_{\P}[\exp(-f_0)]$ for some random variable $f_0$ with $\E_{\QQ_0}[f_0]=0$ and $\E_\QQ[f_0]\leq 0$ for any $\QQ \in \cQ(\TT)$. The separation argument for Theorem~2.4 of \cite{Fritelli:00} then shows that $f_0$ is contained in the $L^1(\QQ)$-closure of $\{V^\gamma_T-L^\infty_+\;|\; \gamma \text{ bounded, $(\cG_t)_{t \in \TT}$ predictable}\}$ for any $\QQ \in \cQ(\TT)$. In particular there are $\gamma_n$ as considered in this set and $R_n \in L^\infty_+$, $n=1,2,\dots$, such that $f_0=\lim_n (V^{\gamma_n}_T-R_n)$ in probability. Since $\cQ(\TT)\not=\emptyset$, Theorem~1 in \cite{KS:06} yields that $f_0$ must be of the same form $f_0=V^{\gamma_0}_T-R_0$ for some $(\cG_t)_{t \in \TT}$-predictable $\gamma_0$ and some $R_0 \geq 0$. As a result, the left-hand side in~\eqref{eq:61} is less than or equal to
\begin{align*}
    \log \E_{\P}[\exp(- V^{\gamma_0}_T)] \leq  \log \E_{\P}[\exp(-f_0)].
\end{align*}
By construction of $\QQ_0$ and $f_0$ the right-hand side is equal to
\begin{align*}
    - \E_{\QQ_0} [\log Z_0] =\E_{\QQ_0}\left[f_0+\log \E_{\P}[\exp({-f_0})]\right]=\log \E_{\P}[\exp({-f_0})],
\end{align*}
which coincides with the upper bound we just have found for the left hand side. This yields the desired `$\leq$'-relation in~\eqref{eq:61}.
\end{proof}

\subsection{Lower bound for the scaling limit}\label{sec:lowerbound}

To prove `$\geq$' in~\eqref{2.3} we construct for vanishing time delay $H_n \downarrow 0$ relaxed martingale measures $\QQ_n$, $n=1,2,\dots$, under which $S_T$ converges in law to $s+\int_0^T \nu_u dW_u$ while their relative entropy with respect to $\P$ explodes at most like $\E_{\P}\left[\int_0^T(\nu_u-\sigma)^2du/(2\sigma^2H_n)\right]$.

To this end observe first that, by standard density arguments, we will get the same supremum there if instead of letting $\nu$ vary over all of $\Gamma$ there, we confine it to be   of the form
$\nu = \hat{\nu}(W)$ with
\begin{align*}
\hat{\nu}_t(x) := \sum_{j=1}^{J-1} f_j(x_{t_0},\dots,x_{t_{j-1}}) 1_{[t_{j},t_{j+1})}(t),  \quad t\in [0,T], \; x \in C[0,T],
  \end{align*}
where $0=t_0<t_1<\dots<t_J=T$ is a finite deterministic partition of $[0,T]$
and each $f_j:\mathbb R^{j}\rightarrow\mathbb R$, $j=1,...,J$, is Lipschitz continuous and bounded. Thus, it suffices to show that each such $\nu$ gives a lower bound for the scaling limit in~\eqref{2.3} along a fixed sequence $H_n \downarrow 0$:
\begin{equation}\label{3.6}
 \liminf_{n\rightarrow\infty} \pi(H_n,A/H_n,f(S_T))\geq
\mathbb E_{\mathbb P}\left[ f\left(s+\int_{0}^T \nu_t dW_t\right)- \frac{1}{2A\sigma^2}\int_{0}^T\left(\nu_t-\sigma\right)^2dt \right].
\end{equation}
For $n=1,2,\dots$, consider the SDE
\begin{align*}
    X^{n}_0=0,\quad
dX^{n}_t=dW_t+\kappa^n_t(X^n)dt
\end{align*}
with the path-dependent drift-coefficient
\begin{align*}
\kappa^n_t(x) &:= \frac{1}{\sigma}\left(\mu-\sum_{j=1}^{J-1}1_{[t_{j},t_{j+1})}(t)\frac{f_j(x_{t_0},\dots,x_{t_{j-1}})-\sigma}{H_n}\Phi\left(x_{t}-x_{(t-H_n)\vee t_{j}}\right)\right)
\end{align*}
where $\Phi(z):=(-1)\vee z \wedge 1$. Since $\kappa^n$ is Lipschitz continuous in the supremum norm on $C[0,T]$, Theorem~2.1 from Chapter IX in \cite{RY} yields a unique strong solution for this SDE. Moreover, boundedness of $\kappa^n$ ensures via Girsanov's theorem that
\begin{align}\label{eq:22}
    \frac{d\QQ_n}{d\P}:= \cE\left(-\int_0^. \kappa^n_u(X^n)dW_u\right)_T
\end{align}
yields the density of a measure $\QQ_n \sim \P$ under which $X^n$ is a Browninan motion. Hence,
\begin{align}\label{eq:21}
    S_t = s+\sigma X^n_t + \int_0^t  (\mu-\sigma \kappa^n_u(X^n)) du
\end{align}
is $\QQ_n$-integrable. Without loss of generality we can assume that $t_{j-1}\leq (t_j-H_n)^+$ for any $j=1,\dots,J$ and so, for any $t\in[t_{j},t_{j+1}]$, we get furthermore
\begin{align*}
    \E_{\QQ_n}&\left[S_{t_{j+1}}-S_t\middle| \cG^{H_n}_t \right] \\=& \int^{t_{j+1}}_t\frac{f_j(X^n_{t_0},\dots,X^n_{t_{j-1}})-\sigma}{H_n}\E_{\QQ_n}\left[ \Phi\left(X^n_{u}-X^n_{(u-H_n)\vee t_{j}}\right) \middle| \cF^W_{(t-H_n)^+}  \right]du.
\end{align*}
By symmetry of both $\Phi$ and the increments of Brownian motion, the latter conditional expectations all vanish and it thus follows that each $\mathbb Q_n$ exhibits the relaxed martingale property~\eqref{eq:51} for $\TT=[0,T]$ and $\cG_t=\cG^{H_n}_t$, $t \in [0,T]$.

Next, from \eqref{eq:21} and letting $\Psi(z):=\Phi(z)-z$, $z \in \RR$, we obtain that, for any $j=0,\dots,J-1$,
\begin{align*}
S_{t_{j+1}}-S_{t_j}=\;&
\sigma (X^{n}_{t_{j+1}}-X^{n}_{t_{j}})+\frac{\hat\nu_{t_j}(X^n)-\sigma}{H_n}\int_{t_j}^{t_{j+1}} \left(X^{n}_u-X^{n}_{(u-H_n)\vee t_j}\right) du\\
&+\frac{\hat\nu_{t_j}(X^n)-\sigma}{H_n}\int_{t_j}^{t_{j+1}} \Psi\left(X^{n}_u-X^{n}_{(u-H_n)\vee t_j}\right) du\\
=\;&\hat\nu_{t_j}(X^n) (X^{n}_{t_{j+1}}-X^{n}_{t_{j}})\\
&+\frac{\hat\nu_{t_j}(X^n)-\sigma}{H_n}\int_{t_{j+1}-H_n}^{t_{j+1}} (X^{n}_u-X^{n}_{t_{j+1}}) du\\
&+\frac{\hat\nu_{t_j}(X^n)-\sigma}{H_n}\int_{t_j}^{t_{j+1}} \Psi\left(X^{n}_u-X^{n}_{(u-H_n)^{+}}\right) du.
\end{align*}
Now take the sum over $j$. Since $X^n$ is a $\QQ_n$-Brownian motion and $|\Psi(z)| \leq z^4$, the contributions from the last two summands converge to 0 in distribution and the contribution from the first one converges in distribution to $\int_0^T \hat{\nu}_t(W)dW_t$. Hence, we can use Slutsky's theorem to conclude that
\begin{align*}
    \Law(S_T|\QQ_n) \to \Law\left(s+\int_0^T \nu_t dW_t\;\middle|\;\P\right)
\end{align*}
and, so, Fatou's lemma yields the lower bound
\begin{equation}\label{3.9}
\liminf_{n\rightarrow\infty}\mathbb E_{\mathbb Q_n}[f(S_T)]\geq\mathbb E_{\mathbb P}\left[f\left(s+\int_{0}^T\nu_t dW_t\right)\right].
\end{equation}

Finally, recalling the construction~\eqref{eq:22} of $\QQ_n$ we get
\begin{align*}
    \lim_n & \left(H_n\mathbb E_{\mathbb Q_n}\left[\log\left(\frac{d\mathbb Q_n}{d\mathbb P}\right)\right]\right)
=\lim_n \left(\frac{H_n}{2}\mathbb E_{\mathbb Q_n}\left[\int_{0}^T |\kappa^{(n)}_t|^2 dt\right]\right)\\
&=\lim_n \left(\sum_{j=1}^{J-1}
\mathbb E_{\mathbb Q_n}\left[\frac{(\hat\nu_{t_j}(X^n)-\sigma)^2 }{2\sigma^2 H_n}\int_{t_j}^{t_{j+1}}\mathbb E_{\mathbb Q_n}\left[\Phi^2\left(X^{n}_t-X^{n}_{(t-H_n)\vee t_j}\right)\middle|\mathcal F^W_{t_j}\right]dt\right]\right)\\
&=\lim_n \left(\sum_{j=1}^{J-1}
\mathbb E_{\mathbb P}\left[\frac{(\hat\nu_{t_j}(W)-\sigma)^2 }{2\sigma^2}\int_{t_j}^{t_{j+1}}\mathbb E_{\mathbb P}\left[(W_1)^2 \wedge \frac{1}{H_n}\right]dt\right]\right)\\
&=\sum_{j=1}^{J-1}\mathbb E_{\mathbb P}\left[\frac{(\nu_{t_j}-\sigma)^2}{2\sigma^2}(t_{j+1}-t_j)\right]
\end{align*}
where the last but one equality holds because $X^n$ is a $\QQ_n$-Brownian motion.
We conclude that
\begin{align}\label{eq:333}
\lim_{n\rightarrow\infty}\left(H_n\mathbb E_{\mathbb Q_n}\left[\log\left(\frac{d\mathbb Q_n}{d\mathbb P}\right)\right]\right)=
\frac{1}{2\sigma^2}\mathbb E_{\mathbb P}\left[\int_{0}^T\left(\nu_t-\sigma\right)^2dt \right].
\end{align}
This together with Lemma~\ref{lem:easydualitybound}
and \eqref{3.9} gives the desired lower bound \eqref{3.6} that we wanted to establish in this section.

\subsection{Upper bound for the scaling limit}\label{sec:upperbound}

For the upper bound of the scaling limit, we will need a lower bound for the relative entropy of a relaxed martingale measure with respect to $\P$ that we establish first:

\begin{lem}\label{lem:entropybound}
 For any delay $H>0$ and $M=1,2,\dots$, the relative entropy of a measure $\QQ \sim \P$ satisfying the relaxed martingale condition~\eqref{eq:51} over $\TT_{H/M}:=\{kH/M:k=0,1,\dots\} \cap [0,T]$ is bounded from below by
 \begin{align*}
     \E_{\QQ}\left[\log \frac{d\QQ}{d\PP}\right] \geq \frac{1}{2\sigma^2H} \frac{M}{M+1}  \E_{\QQ}\left[\left(S_{T_{H/M}}-s-\sigma W^\QQ_{T_{H/M}}\right)^2\right]
 \end{align*}
 where $T_{H/M}:=\max \TT_{H/M}$ and where $W^\QQ$ is the $\QQ$-Brownian motion induced by $W$ via Girsanov's theorem.
\end{lem}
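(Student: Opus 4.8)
The plan is to use Girsanov's theorem to express $\QQ$ in terms of the extra drift it puts on $W$, to read off the resulting finite--variation part of $S$ under $\QQ$, and then to bound its terminal value from above by an It\^o--isometry--type inequality in which the relaxed martingale property~\eqref{eq:51} is used only across grid increments that lie $M+1$ steps apart. If $\E_\QQ[\log(d\QQ/d\P)]=\infty$ there is nothing to prove, and restricting $\QQ$ to $\cF^W_T$ neither increases the relative entropy nor affects~\eqref{eq:51} or the right--hand side of the claim, so one may assume $\cF=\cF^W_T$ and finite entropy. Then $d\QQ/d\P=\cE\big(\int_0^{\cdot}\psi_u\,dW_u\big)_T$ for a predictable $\psi$ with $\E_\QQ[\int_0^T\psi_u^2\,du]<\infty$, and classically $\E_\QQ[\log(d\QQ/d\P)]=\tfrac12\E_\QQ[\int_0^T\psi_u^2\,du]$. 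Writing $W^\QQ_t:=W_t-\int_0^t\psi_u\,du$ for the $\QQ$--Brownian motion, one has
\begin{align*}
 Y_t:=S_t-s_0-\sigma W^\QQ_t=\int_0^t a_u\,du,\qquad a_u:=\sigma\psi_u+\mu ,
\end{align*}
so $\psi_u^2=(a_u-\mu)^2/\sigma^2$ and it suffices to show $\E_\QQ\big[\int_0^{T_{H/M}}(a_u-\mu)^2\,du\big]\geq\tfrac{M}{(M+1)H}\,\E_\QQ[Y_{T_{H/M}}^2]$ and divide by $2\sigma^2$.

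Next I would discretize along $t_k:=kh$ with $h:=H/M$, $K:=\lfloor T/h\rfloor$ (so $T_{H/M}=t_K$), and set $\beta_j:=Y_{t_j}-Y_{t_{j-1}}=\int_{t_{j-1}}^{t_j}a_u\,du$, which lies in $L^2(\QQ)$ by Cauchy--Schwarz in time. Since $(t_{j-1}-H)^+=t_{(j-1-M)^+}$, decomposing $S_{t_j}-S_{t_{j-1}}=\sigma(W^\QQ_{t_j}-W^\QQ_{t_{j-1}})+\beta_j$, using that $W^\QQ$ has independent increments relative to $(\cF^W_t)$ under $\QQ$, and invoking~\eqref{eq:51} over $\TT_{H/M}$ gives
\begin{align*}
 \E_\QQ\big[\beta_j\,\big|\,\cF^W_{t_{(j-1-M)^+}}\big]=0,\qquad j=1,\dots,K .
\end{align*}
Two consequences are decisive: first $\E_\QQ[\beta_j]=0$, hence $\E_\QQ[Y_{T_{H/M}}]=0$; second, for every residue class $r\in\{1,\dots,M+1\}$ the subsequence $(\beta_j)_{j\equiv r\,(\mathrm{mod}\,M+1)}$ is a $\QQ$--martingale difference sequence for the filtration $(\cF^W_{t_j})_{j\equiv r}$, because consecutive indices $j$ and $j+M+1$ of the class satisfy $(j+M+1)-1-M=j$; in particular those increments are pairwise orthogonal in $L^2(\QQ)$.

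Then I would assemble the bound: writing $Y_{T_{H/M}}=\sum_{r=1}^{M+1}Z_r$ with $Z_r:=\sum_{j\equiv r}\beta_j$, Cauchy--Schwarz over the $M+1$ classes gives $Y_{T_{H/M}}^2\leq(M+1)\sum_{r}Z_r^2$; taking $\E_\QQ$ and using orthogonality inside each class, $\E_\QQ[Z_r^2]=\sum_{j\equiv r}\E_\QQ[\beta_j^2]$, so $\E_\QQ[Y_{T_{H/M}}^2]\leq(M+1)\sum_{j=1}^K\E_\QQ[\beta_j^2]$. Bounding $\beta_j^2\leq h\int_{t_{j-1}}^{t_j}a_u^2\,du$ and summing yields $\E_\QQ[Y_{T_{H/M}}^2]\leq(M+1)h\,\E_\QQ[\int_0^{T_{H/M}}a_u^2\,du]=\tfrac{(M+1)H}{M}\,\E_\QQ[\int_0^{T_{H/M}}a_u^2\,du]$. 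Since $\E_\QQ[\int_0^{T_{H/M}}a_u\,du]=\E_\QQ[Y_{T_{H/M}}]=0$, expanding the square gives $\E_\QQ[\int_0^{T_{H/M}}(a_u-\mu)^2\,du]=\E_\QQ[\int_0^{T_{H/M}}a_u^2\,du]+\mu^2T_{H/M}\geq\tfrac{M}{(M+1)H}\E_\QQ[Y_{T_{H/M}}^2]$, and recalling $Y_{T_{H/M}}=S_{T_{H/M}}-s_0-\sigma W^\QQ_{T_{H/M}}$ completes the argument.

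The step I expect to be the main obstacle is getting the \emph{sharp} constant $M/(M+1)$ rather than something like $M/(2M+1)$: a naive expansion of $\big(\sum_j\beta_j\big)^2$ keeping all surviving cross terms (those with $|i-j|\leq M$) loses a further factor of roughly two, and only the grouping into the $M+1$ residue classes, each of which is a \emph{genuine} martingale difference sequence, avoids this loss. Verifying that each class really is a martingale difference sequence under the delayed filtration, and handling the constant--drift term $\mu$ via the identity $\E_\QQ[Y_{T_{H/M}}]=0$, are the points requiring care.
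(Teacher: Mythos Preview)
Your argument is correct and matches the paper's proof essentially step for step: both use Girsanov to identify the drift, discretize into increments over intervals of length $H/M$, exploit the relaxed martingale property to obtain orthogonality within each residue class modulo $M{+}1$, and then combine Jensen/Cauchy--Schwarz in time with Cauchy--Schwarz over the $M{+}1$ classes to extract the constant $\tfrac{M}{M+1}$. The only cosmetic difference is the direction of the estimate (you bound $\E_\QQ[Y_{T_{H/M}}^2]$ from above, the paper bounds the entropy from below) and your slightly cleaner handling of the $\mu$--term via $\E_\QQ[Y_{T_{H/M}}]=0$ at the end rather than dropping $(\mu H/M)^2$ midway.
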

\begin{proof}
 By It\^{o}'s representation theorem $W^\QQ = W-\int_0^. \alpha_u du$ for the predictable $\alpha$ such that
 \begin{align*}
     \left.\frac{d\QQ}{d\PP}\right|_{\cF^W_t}=\cE\left(\int_0^. \alpha_u dW_u\right)_t, \quad t \in [0,T].
 \end{align*}
 Hence, the relative entropy to be bounded from below is
 \begin{align*}
     \E_{\QQ}\left[\log \frac{d\QQ}{d\PP}\right] =\frac{1}{2}\E_{\QQ}\left[\int_0^T \alpha^2_u du\right]
 \end{align*}
 and
 \begin{align*}
     S_t-s-\sigma W^\QQ_t = \int_0^t \sigma \alpha_u du + \mu t, \quad t \in [0,T].
 \end{align*}

Fix $M\in\{1,2,\dots\}$ and $H>0$ and let us consider the increments over periods of length $H/M$:
\begin{align*}
    \Delta_t&:=S_{(t+H/M) \wedge
    T}-S_{t\wedge
    T}-\sigma\left(W^\QQ_{(t+H/M)\wedge
    T}-W^\QQ_{t\wedge
    T}\right)\\&= \int_{t\wedge
    T}^{(t+H/M)\wedge
    T}  (\sigma\alpha_u+\mu) du.
\end{align*}
Observe that
\begin{align}\label{eq:41}
S_{T_{H/M}}-s-\sigma W^{\QQ}_{T_{H/M}}=\sum_{k=1}^{K_{H/M}}    \Delta_{(k-1)H/M}
\end{align}
for $K_{H/M}:=T_{H/M}/(H/M)$. Moreover, due to the relaxed martingale property of $S$ (and the standard martingale property of $W^{\QQ}$) under $\QQ$, we have
\begin{align}\label{eq:42}
    \E_{\QQ}[\Delta_t]=0 \quad \text{ and } \quad  \E_{\QQ}[\Delta_t\Delta_s]=0
\end{align}
for any $t \in [0,T]$ and any $s \in [0,T]$ with $|t-s|\geq (M+1)H/M$.

Hence, subdividing $[0,T]$ into intervals of length $H/M$ we can estimate by Jensen's inequality
\begin{align*}
      \E_{\QQ}\left[\log \frac{d\QQ}{d\PP}\right] &= \frac{1}{2}\sum_{k=1}^\infty  \E_{\QQ}\left[\int_{((k-1)H/M)\wedge T}^{(kH/M)\wedge T} \alpha^2_u du\right] \\&\geq \frac{1}{2}\sum_{k=1}^{K_{H/M}} \E_{\QQ}\left[\frac{1}{H/M}\left(\int_{((k-1)H/M}^{kH/M} \alpha_u du\right)^2\right]\\
      &= \frac{M}{2H}\sum_{k=1}^{K_{H/M}} \E_{\QQ}\left[\left(\frac{\Delta_{(k-1)H/M}-\mu H/M}{\sigma}\right)^2\right]
      \\&=  \frac{M}{2H\sigma^2}\sum_{k=1}^{K_{H/M}}\left( \E_{\QQ}\left[\left(\Delta_{(k-1)H/M}\right)^2\right]+(\mu H/M)^2\right),
  \end{align*}
where in the last step we used $\E_{\QQ}[\Delta^M_t]=0$. Dropping the squares involving $\mu$ and rearranging the sum over $k$ gives
\begin{align*}
   \E_{\QQ}\left[\log \frac{d\QQ}{d\PP}\right] & \geq  \frac{M}{2H\sigma^2}\sum_{i=0}^M \sum_{j:i+j(M+1)< K_{H/M}} \E_{\QQ}\left[\left(\Delta_{(i+j(M+1)H/M}\right)^2\right].
\end{align*}
In this rearrangement, we have due to~\eqref{eq:42} that for each $i=0,\dots,M-1$ the sums over $j$  are taken over $\QQ$-uncorrelated increments and therefore amount to
\begin{align*}
    \E_{\QQ}\left[\sum_{j} \left(\Delta_{(i+j(M+1))H/M)}\right)^2\right]
    =\E_{\QQ}\left[\left(\sum_{j} \Delta_{(i+j(M+1))H/M)}\right)^2\right].
\end{align*}
Plugging this into our above estimate yields
\begin{align*}
    \E_{\QQ}\left[\log \frac{d\QQ}{d\PP}\right] & \geq  \frac{M}{2H\sigma^2}  \E_{\QQ}\left[\sum_{i=0}^M\left(\sum_{j:i+j(M+1)< K_{H/M}}\Delta_{(i+j(M+1)H/M}\right)^2\right]\\
    & \geq \frac{M}{2H\sigma^2}   \E_{\QQ}\left[\frac{1}{M+1}\left(\sum_{i=0}^M\sum_{j:i+j(M+1)< K_{H/M}}\Delta_{(i+j(M+1)H/M}\right)^2\right]
\end{align*}
where the last estimate is due to Jensen's inequality. The above double sum is aggregating all increments $\Delta$ over intervals $[(k-1)H/M,kH/M]$ fully fitting into $[0,T]$ and thus coincides with the left-hand side of~\eqref{eq:41}. We obtain the claimed lower bound for the entropy.
\end{proof}

Let us now \textbf{bound the scaling limit in~\eqref{2.3} from above}. For this, observe that, due to Remark~\ref{rem:denom}, the indifference price for $f(S_T)$ behaves for exploding risk aversion $\lambda=A/H$ like the certainty equivalent
\begin{align*}
    c_H:=\frac{1}{A/H} \log \inf_{\gamma} \E_{\P}\left[\exp\left(-\frac{A}{H}(V^\gamma_T-f(S_T))\right)\right].
\end{align*}
By standard density arguments, $c_H$ is continuous in $H>0$. When taking the scaling limit $H \downarrow 0$, we can thus restrict attention to $H>0$ such that $T/H \in \{1,2,\dots\}$.
If we confine ourselves to $(\cG^H_t)$-predictable strategies $\gamma$ that only intervene at times in $\TT_{H/M}$ as defined in the previous lemma, the above certainty equivalent can only grow and we can use the discrete-time duality Theorem~\ref{thm:duality} to estimate it by
\begin{align*}
    c_H \leq \E_{\QQ_{H/M}}\left[f(S_T)-\frac{1}{A/H} \log \frac{d\QQ_{H/M}}{d\P}\right]
\end{align*}
where $\QQ_{H/M} \sim \P$ is a probability that exhibits the relaxed martingale property~\eqref{eq:51} for $\TT=\TT_{H/M}$ and that has finite entropy relative to $\P$. This entropy can be bounded from below by  Lemma~\ref{lem:entropybound} and we thus obtain
\begin{align}\label{referee}
    c_H \leq \E_{\QQ_{H/M}}\left[f(S_T)-\frac{1}{A/H} \frac{M}{2H\sigma^2(M+1)}\left(S_{T}-s-\sigma W^{\QQ_{H/M}}_{T}\right)^2\right]
\end{align}
for the $\QQ_{H/M}$-Brownian motion $W^{\QQ_{H/M}}$ induced by $W$ via Girsanov's theorem; here we used that $T=T_{H/M} = \max \TT_{H/M}$ which holds because $T/H \in \{1,2,\dots\}$ by our assumption on $H$.

Now, if $S_T$ is measurable with respect to $\sigma(W^{\QQ_{H/M}}_u, \; u \in [0,T])$, we can apply the martingale representation theorem to find a predictable $\nu \in L^2(\P \otimes du)$ with $f(S_T) = s + \int_0^T \nu_u dW^{\QQ_{H/M}}_u$; an upper bound in the form required by~\eqref{2.3} emerges then from~\eqref{referee} by use of It\^{o}'s isometry and letting $M \uparrow \infty$.
In general, though, $S_T$ is just $\cF^W_T$-measurable. Following Theorem~1 in~\cite{S:76}, we exploit that our probability space $(\Omega,\cF,\QQ_{H/M})$ is rich enough to support a continuously distributed random variable independent of $W^{\QQ_{H/M}}_T$  such as  $U:=W^{\QQ_{H/M}}_T-2W^{\QQ_{H/M}}_{T/2}$ to find a Borel-measurable function $\mathbf{s}:\RR^2 \to \RR$ for which
\begin{align*}
    \Law\left((S_T,W^{\QQ_{H/M}}_T)\;\middle|\;\QQ_{H/M}\right) = \Law\left((\mathbf{s}(U,W^{\QQ_{H/M}}_T),W^{\QQ_{H/M}}_T)\;\middle|\;\QQ_{H/M}\right).
\end{align*}
This allows us to continue our estimate (\ref{referee}) as follows:
\begin{align*}
    c_H &\leq \E_{\QQ_{H/M}}\left[f(\mathbf{s}(U,W^{\QQ_{H/M}}_T))- \frac{M}{2A\sigma^2(M+1)}\left(\mathbf{s}(U,W^{\QQ_{H/M}}_T)-s-\sigma W^{\QQ_{H/M}}_{T}\right)^2\right]
    \\&=\E_{\P}\left[f(\mathbf{s}(U',W_T))- \frac{M}{2A\sigma^2(M+1)}\left(\mathbf{s}(U',W_T)-s-\sigma W_{T}\right)^2\right]
\end{align*}
where $U':=W_T-2W_{T/2}$ is independent of $W_T$ under $\P$ with the same distribution as $U$ under $\QQ_{H/M}$. Clearly, the fact that $S_T$ is square-integrable under $\QQ_{H/M}$ with expectation $s$, translates into $\mathbf{s}(U',W_T)$ having the same properties under $\P$. So, by It\^{o}'s representation theorem it is of the form $s+\int_0^T \nu_u dW_u$ for some $(\cF^W_t)$-predictable $\nu \in L^2(\P\otimes dt)$ and we obtain that
\begin{align*}
    c_H &\leq \E_{\P}\left[f\left(s+\int_0^T \nu_t dW_t\right)- \frac{M}{2A\sigma^2(M+1)}\left(\int_0^T (\nu_t-\sigma) dW_t\right)^2\right].
\end{align*}
Letting $M\uparrow \infty$ and using It\^{o}'s isometry yields the desired upper bound.

\bibliography{finance}{}
\bibliographystyle{abbrv}

\end{document}